\newif\ifFull
\newcommand{\R}{{\bf R}}
\newcommand{\cP}{{\mathcal{P}}}
\newcommand{\define}[1]{{\bfseries\itshape{#1}}}
\newtheorem{theorem}{Theorem}[section]
\newtheorem{lemma}[theorem]{Lemma}
\newenvironment{proof}{\noindent{\it Proof:}\hspace*{1em}}{\medskip}
\newcommand{\qed}{\rule{6pt}{6pt}}
\begin{document}
\title{Planar Drawings of Higher-Genus Graphs}

\iffalse
\author{
Christian A.~Duncan\inst{1}
\and Michael T.~Goodrich\inst{2} 
\and Stephen G.~Kobourov\inst{3}
}

\institute{
    Dept.~of Computer Science,
    Louisiana Tech Univ.\\
    {\tt \url{http://www.latech.edu/~duncan/} }
\and
Dept.~of Computer Science,
Univ.~of California, Irvine\\
{\tt \url{http://www.ics.uci.edu/~goodrich/}}\\
\and
    AT\&T Research Labs,
    Florham Park, NJ\\
    {\tt \url{http://www.research.att.com/info/skobourov}}
}
\else
\author {
\sc Christian A.~Duncan
\\\normalsize Department of Computer Science
\\\normalsize Louisiana Tech University
\\\normalsize {\tt \url{http://www.latech.edu/~duncan/} }
\and
\sc Michael T. Goodrich
\\\normalsize Department of Computer Science
\\\normalsize University of California, Irvine
\\\normalsize {\tt \url{http://www.ics.uci.edu/~goodrich/}}
\and 
\sc Stephen G. Kobourov
\\\normalsize Department of Computer Science
\\\normalsize University of Arizona
\\\normalsize {\tt \url{http://www.cs.arizona/~kobourov/}}
}
\fi

\maketitle

\begin{abstract}
In this paper, we give polynomial-time
algorithms that can take a graph $G$ with a given combinatorial 
embedding on an orientable surface $\cal S$ of genus $g$ 
and produce a planar drawing of $G$ in $\R^2$, 
with a bounding face defined by a polygonal
schema $\cal P$ for $\cal S$.
Our drawings are planar, but they allow for multiple copies of vertices and
edges on $\cal P$'s boundary, which is a common way of visualizing
higher-genus graphs in the plane.
Our drawings can be defined with respect to either a canonical
polygonal schema or a polygonal cutset schema, which provides an
interesting tradeoff, since canonical schemas have fewer sides,
and have a nice topological structure, but they can have many more repeated
vertices and edges than general polygonal cutsets.
As a side note, we show that it is NP-complete to
determine whether a given graph embedded in a genus-$g$ surface has a
set of $2g$ fundamental cycles with vertex-disjoint interiors, which
would be desirable from a graph-drawing perspective.
\end{abstract}

\section{Introduction}
The \emph{classic} way of drawing a graph $G=(V,E)$ in $\R^2$ involves
associating each vertex $v$ in $V$ with a unique point $(x_v,y_v)$ and associating with each edge $(v,w)\in E$ an open Jordan
curve that has  $(x_v,y_v)$ and $(x_w,y_w)$ as its endpoints.
If the curves associated with the edges in a classic drawing of 
$G$ intersect only at
their endpoints, then (the embedding of) $G$ is a \emph{plane graph}.
Graphs that admit plane graph representations are
\emph{planar graphs}, and
there has been a voluminous amount of work on algorithms on classic drawings
of planar graphs.
Most notably, planar graphs can be drawn with vertices assigned to
integer coordinates in an $O(n)\times O(n)$ grid, which is often a 
desired type of classic drawing known as a \emph{grid drawing}.
Moreover, there are planar graph drawings that use only straight line
segments for edges~\cite{fpp-hdpgg-90}.

The beauty of plane graph drawings is that, by avoiding edge
crossings, confusion and clutter in the drawing is minimized. Likewise,
straight-line drawings further improve graph visualization by 
allowing the eye to easily follow connections between adjacent vertices.
In addition, grid drawings enforce a natural separation between
vertices, which further improves readability.
Thus, a ``gold standard'' in classic drawings is to produce planar
straight-line grid drawings and, when that is not easily done, to
produce planar grid drawings with edges drawn as simple polygonal chains.

Unfortunately, not all graphs are planar. 
So drawing them in the classic way requires some compromise in the
gold standard for plane drawings.
In particular, any classic drawing of a non-planar graph must necessarily 
have edge crossings, and minimizing the number of crossings is NP-hard~\cite{gj-cninc-83}. One point of hope for improved drawings of non-planar graphs is to
draw them crossing-free on surfaces of higher genus, such as 
toruses, double toruses, or, in general, a surface topologically
equivalent to a sphere with $g$ handles, that is, a \emph{genus-$g$}
surface.
Such drawings are called 
\emph{cellular} embeddings or \emph{$2$-cell} embeddings, since they
partition the genus-$g$ surface into a collection of cells that are
topologically equivalent to disks. 
As in classic drawings of planar graphs,
these cells are called \emph{faces}, and it is easy to see that such
a drawing would avoid edge crossings. 

In a fashion analogous to the case with planar graphs,
cellular embeddings of graphs in a genus-$g$ surface can be
characterized combinatorially.
In particular, it is enough if we just have a rotational order of the edges
incident on each vertex in a graph $G$ to determine a 
combinatorial embedding of $G$ on a surface (which has that ordering
of associated curves listed counterclockwise around each vertex).
Such a set of orderings is called a \emph{rotation system} and, since
it gives us a combinatorial description of the set of faces, $F$,
in the embedding,
it gives us a way to determine the genus of the (orientable)
surface that $G$ is embedded into by using the \emph{Euler characteristic},
$|V|-|E|+|F| = 2- 2g,$
which also implies that $|E|$ is $O(|V|+g)$~\cite{mt-gos-01}.

Unfortunately, given a graph $G$, 
it is NP-hard to find
the smallest $g$ such that $G$ has a combinatorial cellular embedding on a
genus-$g$ surface~\cite{t-ggpnp-89}.
This challenge need not be a deal-breaker in practice, however, for there
are heuristic algorithms for producing such combinatorial
embeddings (that is, consistent rotation systems)~\cite{Chen1997317}.
Moreover, higher-genus graphs often come together with 
combinatorial embeddings in practice, as in many computer graphics
and mesh generation applications.

In this paper, we assume that we are given a
combinatorial embedding of a graph $G$ on a genus-$g$ surface, $\cal S$, and are
asked to produce a geometric drawing of $G$ that respects
the given rotation system. Motivated by the gold standard for planar graph drawing and by the fact that computer screens and physical printouts are still primarily two-dimensional display surfaces, the approach we take is to draw $G$ in the plane rather than on some embedding of $\cal S$ in $\R^3$.

Making this choice of drawing paradigm, of course, requires that we
``cut up'' the genus-$g$ surface, $\cal S$,
and ``unfold'' it so that the resulting sheet is topologically
equivalent to a disk.
The traditional method for
performing such a cutting is with a \emph{canonical polygonal schema}, 
$\cal P$, which
is a set of $2g$ cycles on $\cal S$ all containing a common point, $p$,
such that cutting $\cal S$ along these cycles results in a
topological disk.
These cycles are \emph{fundamental} in that each
of them is a continuous closed curve on $\cal S$ that cannot be
retracted continuously to a point.
Moreover,
these fundamental 
cycles can be paired up into complementary sets of cycles, $(a_i,b_i)$,
one for each handle, so that if we orient the sides of $\cal P$, then a
counterclockwise ordering of the sides of $\cal P$ can be listed as
$
a_1b_1a_1^{-1}b_1^{-1}
a_2b_2a_2^{-1}b_2^{-1}
\,
\ldots
\,
a_{g}b_{g}a_{g}^{-1}b_{g}^{-1},
$
where $a_i^{-1}$ ($b_i^{-1})$ is 
a reversely-oriented copy of $a_i$, so that these two sides
of $\cal P$
are matched in orientation on $\cal S$.
Thus, the canonical polygonal schema for a genus-$g$ surface $\cal S$
has $4g$ sides that are pairwise identified.


Because we are interested in drawing the graph $G$ and 
not just the topology of $\cal S$, 
it would be preferable if the fundamental cycles are also cycles in
$G$ in the graph-theoretical sense.
It would be ideal if these cycles form a canonical polygonal
schema with no repeated vertices other than the common one.
This is not always possible~\cite{lpvv-ccpso-01} and furthermore, as we show
in~\cite{XXX}, the problem of finding a set of $2g$ fundamental
cycles with vertex-disjoint interiors in a combinatorially 
embedded genus-$g$ graph is NP-complete.
There are two natural choices, both of which we explore in this
paper:
\begin{itemize}
\item Draw $G$ in a polygon $P$ corresponding to 
a canonical polygonal schema, $\cal P$,
possibly with repeated vertices and edges on its boundary.
\item Draw $G$ in a polygon $P$ corresponding to 
a polygonal schema, $\cal P$, that is not canonical.
\end{itemize}
In either case, the edges and vertices on the boundary
of $P$ are repeated (since we ``cut'' $\cal S$ along these edges and
vertices). Thus, we need labels in our drawing of $G$ to identify the
correspondences. 
Such planar drawings of $G$ inside a polygonal schema $\cal P$ 
are called \emph{polygonal-schema} drawings of $G$.
There are three natural aesthetic criteria such drawings should satisfy:
\begin{enumerate}
\item
\emph{Straight-line edges:} 
All the edges in a polygonal-schema drawing should be rendered as
polygonal chains, or straight-line edges, when possible.
\item
\emph{Straight frame:}
Each edge of the polygonal schema 
should be rendered as a straight line segment, with the
vertices and edges of the corresponding fundamental cycle, placed along this segment.
We refer to such a polygonal-schema drawing as having 
a \emph{straight frame}.
\item
\emph{Polynomial area:}
Drawings should have polynomial area when they are
normalized to an integer grid.
\end{enumerate}

It is also possible to avoid repeated vertices and instead use a classic 
graph drawing paradigm, by transforming
the fundamental polygon rendering using
polygonal-chain edges that run through
``overpasses'' and ``underpasses'' as in road networks, so as to illustrate
the topological structure of $G$; see Fig.~\ref{fig-fund}.

\begin{figure}[tbh]
\begin{center}
\includegraphics[angle=-90,width=12cm]{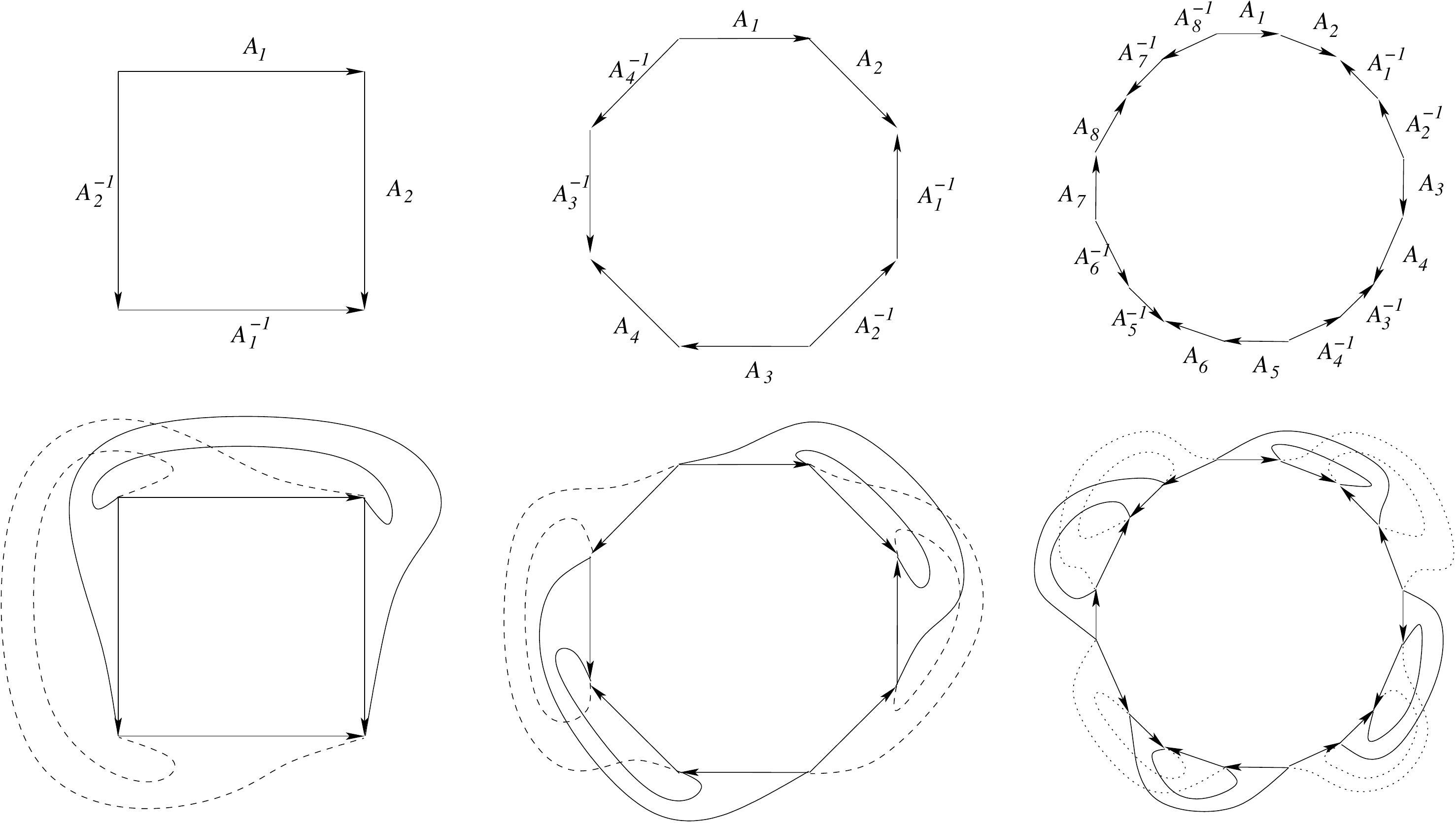}
\end{center}
\caption{\small\sf First row: Canonical polygonal schemas
for graphs of genus one, two and four. Second row: Unrolling the high
genus graphs with the aid of the overpasses and underpasses.}
\label{fig-fund}
\end{figure}


\paragraph{Our Contributions.}
We provide several methods for producing
planar polygonal-schema drawings of higher-genus graphs.
In particular, we provide four algorithms, one for torodial ($g=1$)
graphs and three for non-toroidal ($g>1$) graphs.
Our algorithm for toroidal 
graphs simultaneously achieves the three
aesthetic criteria for polygonal schema drawings: it uses
straight-line edges, a straight frame, and polynomial area.
The three algorithms for non-toroidal
graphs, \emph{Peel-and-Bend},
\emph{Peel-and-Stretch}, and \emph{Peel-and-Place},
achieve two of the tree aesthetic criteria, and differ in which criteria they fail to meet.

\section{Finding Polygonal Schemas}
Suppose we are given a graph $G$ together with its cellular embedding
in a genus-$g$ surface, $\cal S$.
An important first step in all of our algorithms involves our finding a
polygonal schema, $\cal P$,  for $G$, that is, a set of cycles
in $G$ such that cutting $\cal S$ along these cycles results in a
topological disk.
We refer to this as the \emph{Peel}
step, since it involves cutting the surface $\cal S$ until it becomes
topologically equivalent to a disk.
Since these cycles form the sides of the fundamental polygon we will
be using as the outer face in our drawing of $G$, it is desirable
that these cycles be as ``nice'' as possible with respect to drawing
aesthetics. 

\subsection{Trade-offs for Finding Polygonal Schemas}
Unfortunately, some desirable properties are not effectively
achievable.
As Lazarus \textit{et al.}~\cite{lpvv-ccpso-01} show, it is not always possible
to have a canonical polygonal schema $\cal P$ such that each
fundamental cycle in $\cal P$ has a distinct
set of vertices in its interior (recall that the interior of a
fundamental cycle is the set of vertices distinct from the common
vertex shared with its complementary fundamental cycle---with this
vertex forming a corner of a polygonal schema).
In addition, we show in~\cite{XXX} that finding a vertex-disjoint
set of fundamental cycles is NP-complete.
So, from a practical point of view, we have two choices with respect
to methods for finding polygonal schemas.

\paragraph{Finding a Canonical Polygonal Schema.}
As mentioned above, a canonical polygon schema of a graph $G$ 2-cell
embedded in a surface of genus $g$ consists of 
$4g$ sides, which correspond to $2g$ fundamental cycles all
containing a common vertex.
Lazarus {\it et al.}~\cite{lpvv-ccpso-01} show that one can find such a schema
for $G$ in $O(gn)$ time and with total size $O(gn)$, and they
show that this bound is within a constant factor of optimal in the worst case,
where $n$ is the total combinatorial complexity of $G$ (vertices, edges, and faces), which is $O(|V|+g)$.

\paragraph{Minimizing the Number of Boundary Vertices in a Polygonal Schema.}
Another optimization would be to minimize the number of vertices in the
boundary of a polygonal schema.
Erikson and Har-Peled~\cite{eh-ocsid-02} show that this problem is NP-hard,
but they provide an $O(\log^2 g)$-approximation algorithm that runs
in $O(g^2n \log n)$ time, and they give an exact algorithm that
runs in $O(n^{O(g)})$ time.

In our \emph{Peel} step, we assume that we use one of these two
optimization criteria to find a polygonal schema, which either
optimizes its number of sides to be $4g$, as in the canonical case,
or optimizes the number of vertices on its boundary, which will be
$O(gn)$ in the worst case either way.
Nevertheless, for the sake of concreteness, we often describe our
algorithms assuming we are given a canonical polygonal schema.
It is straight-forward to adapt these algorithms for non-canonical schemas. 
\subsection{Constructing Chord-Free Polygonal Schemas}
\label{sec:chordFree}
In all of our algorithms the first step, {\em Peel}, constructs a polygonal schema of the input graph $G$.
In fact, we need a polygonal schema,  $\cal P$, in which
there is no chord connecting two vertices on the same side of $\cal P$.
Here we show how to transform any polygonal
schema into a chord-free polygonal schema.

In the {\em Peel} step, we cut the graph $G$ along a canonical set of 
$2g$ fundamental cycles getting two
copies of the cycle in $G^*$, the resulting planar graph. 
For each of the two pairs of every fundamental cycle there may be chords.  
If the chord connects two vertices that are in different copies of the cycle in $G^*$ then this
is a chord that {\em can} be drawn with a straight-line edge and hence
does not create a problem. 
However, if the chord connects two vertices in the
same copy of the cycle in $G^*$, then we will not be able to place all
the vertices of that cycle on a straight-line segment; see Figure~\ref{fig:chordFree}(a). 
We show next that a new chord-free polygonal schema can be efficiently determined from the original schema.

\begin{theorem}
\label{theorem:chordFree}
Given a graph $G$ combinatorially embedded in a genus-$g$ surface 
and a canonical polygonal schema $\cP$ on $G$ with a common vertex $p$,
a \emph{chord-free polygonal schema} $\cP^*$ can be found in $O(gn)$ time.
\end{theorem}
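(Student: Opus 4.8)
The plan is to process the $2g$ fundamental cycles of $\cP$ one at a time, in each case replacing the cycle by another cycle through $p$ that carries no self-chord, and to perform all the replacements needed for a single cycle by one shortest-path computation so that the total work stays linear in the size $O(gn)$ of $G^*$. Fix a fundamental cycle $C = p = v_0, v_1, \dots, v_{k-1}, v_0$ of $G$, giving a side $a_i$ of $\cP$, and suppose $e = v_s v_t$ with $s < t$ is a chord both of whose endpoints lie on the \emph{same} copy of $C$ in $G^*$; recall that chords with endpoints on different copies are harmless. Let $\alpha$ be the subpath $v_s, v_{s+1}, \dots, v_t$ of $C$, which stays inside side $a_i$; since $v_0 = p$ is an endpoint of the side, $p$ is not an interior position of $\alpha$. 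Define the \emph{rerouting of $C$ across $e$} to be the closed walk $C'$ obtained from $C$ by deleting $v_{s+1}, \dots, v_{t-1}$ and inserting the edge $e$; it again passes through $p$.

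The first thing to establish is that rerouting does not spoil the schema. In the disk $G^*$ the chord $e$ together with the boundary arc $\alpha$ bounds a sub-disk $D$, and under the gluing $G^* \to \mathcal{S}$ the arc $\alpha$ is identified only with the matching sub-arc of the \emph{other} copy of $C$, which is disjoint from $D$. Hence cutting $\mathcal{S}$ along $C'$ rather than $C$ amounts, in the cut-open picture, to detaching $D$ along $\alpha$ and reattaching it along that matching sub-arc; attaching a disk to a disk along a boundary arc yields a disk, and the only change to the boundary word is to replace the factor $a_i = \mu\,\alpha\,\nu$ by $\mu\,e\,\nu$, and $a_i^{-1}$ correspondingly. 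So $C'$ gives a canonical polygonal schema with the same common vertex $p$. Moreover the sub-disks $D$ coming from different chords, and from different cycles, are pairwise disjoint --- each is cut off in $\mathrm{int}(G^*)$ by a chord of a single side --- so all the reroutings are independent and may be carried out together.

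Next, the batching. For side $a_i$ form the directed graph on nodes $0, 1, \dots, k$ with an arc $j \to j+1$ for each edge of $C$ and an arc $s \to t$ for each chord $v_s v_t$ with both endpoints on the same copy of $C$; it is acyclic. A shortest $0$-to-$k$ path $P$ yields, by composing the reroutings across the chords that $P$ uses (these occur in series along $P$, so their sub-disks are disjoint), a cycle $C'_i$ through $p$. If some chord joined two vertices on one copy of $C'_i$, it would be an arc of this graph between two non-consecutive nodes of $P$, contradicting minimality; hence $C'_i$ carries no such chord. Since rerouting a cycle only moves it inside $\mathrm{int}(G^*)$, it leaves unchanged how edges attach near the other cycles; so cutting $\mathcal{S}$ along all the $C'_i$ at once produces a schema $\cP^*$ in which no chord joins two vertices on the same copy of any side, i.e.\ $\cP^*$ is chord-free (and still canonical).

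For the running time, recall that $\cP$, and hence $G^*$, has size $O(gn)$. All same-copy chords are found by one scan over the edges of $G$: an edge not on a cut cycle occurs exactly once in $G^*$ and is charged to at most one side, using the position labels produced when $\mathcal{S}$ was cut. For side $a_i$ the directed graph has $O(k_i + c_i)$ arcs, where $k_i$ is the side length and $c_i$ its number of same-copy chords, and a shortest path is found in that time by relaxing the nodes in the order $0, 1, \dots, k_i$; since $\sum_i (k_i + c_i) = O(gn)$, the whole construction runs in $O(gn)$ time. The step I expect to be the real obstacle is the topological bookkeeping behind the correctness claim --- verifying that the sub-disk $D$ is genuinely embedded in $\mathcal{S}$, that reroutings on different sides truly do not interact, and dispatching the degenerate cases (chords incident to the common vertex $p$, fundamental cycles that revisit a vertex, and loops or parallel edges) --- but none of these changes the overall structure of the argument.
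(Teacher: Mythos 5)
Your proposal is correct in outline but takes a genuinely different route from the paper's. The paper does not restrict the new cycle to be a shortcut of the old one: for each $c_i$ it re-glues the cut-open disk along $c_i$ to form a topological cylinder, in which the two copies $p_1,p_2$ of the common vertex lie at the ends of the seam, and runs a shortest-path search from $p_1$ to $p_2$ in the \emph{entire} cylinder graph; cutting the cylinder along any such arc restores a disk, and a shortest path is automatically an induced path, so \emph{no} edge at all joins two non-consecutive vertices of the new cycle --- chord-freeness falls out with no classification of chords, and the cycles are then processed sequentially (the paper even remarks that recutting along $c^*_i$ may shorten, but cannot add chords to, later cycles). Your version instead replaces each side by a subsequence of its own vertices, found by a shortest path in a DAG whose arcs are the side's edges and its same-copy chords. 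What this buys is locality and an immediate running-time bound ($\sum_i(k_i+c_i)=O(gn)$, versus the paper's $2g$ global searches), and the homotopy argument via the excised sub-disks $D$ is a welcome justification that the paper omits. But it puts real weight on the step you flagged: your minimality argument only rules out chords of $C'_i$ that were already \emph{same-copy} chords of $C$, so you must additionally show that a different-copy chord of $C$ (or an edge of $C$ itself, when the cycle revisits a vertex) cannot turn into a same-copy chord of $C'_i$ after rerouting. This does hold --- the left/right partition of the rotation at a surviving vertex changes only at the endpoints of a used chord, and the edges that switch sides there all enter the excised sub-disk $D$, whose other boundary vertices are exactly the deleted ones --- but it is precisely the bookkeeping that the paper's ``shortest path is an induced path in the cylinder'' formulation makes unnecessary, and your write-up should carry that verification rather than defer it.
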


\begin{proof}
We first use the polygonal schema to cut the embedding of $G$ into a
topological disk; see Fig.~\ref{fig:chordFree}(a).
Notice this cutting will cause certain vertices to be split into multiple
vertices.
For each fundamental cycle in $c_i \in \cP$, we stitch the disk graph back together
along this cycle forming a topological cylinder.
The outer edges (left and right) of the cylinder along this stitch will have two copies
of the vertex $p$, say $p_1$ and $p_2$.
We perform a shortest path search from $p_1$ to $p_2$.
This path becomes our new fundamental cycle $c^*_i$, (since $p_1$ and $p_2$ are 
the same vertex in $G$).
Observe that this cycle must be chord-free or else the path chosen was not the
shortest path; see Fig.~\ref{fig:chordFree}(b).
We then cut the cylinder along $c^*_i$ and proceed to $c_{i+1}$.
The resulting set, $\cP^*=\{c^*_1, c^*_2, \dots, c^*_{2g}\}$, is 
therefore a collection of chord-free fundamental cycles
all sharing the common vertex $p$.
\qed
\end{proof}

It should be noted that, although each cycle $c^*_i$ is at the time of its creation 
a shortest path from the two copies of $p$, 
these cycles are {\em not} the shortest fundamental cycles possible.
For example, a change in the cycle of $c_{i+1}$ could introduce a shorter possible path 
for $c^*_i$, but not additional chords.

\begin{figure}
\begin{center}
\includegraphics[width=6cm]{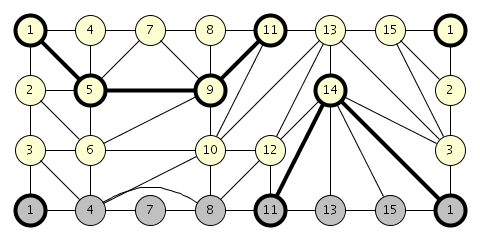}
\includegraphics[width=6cm]{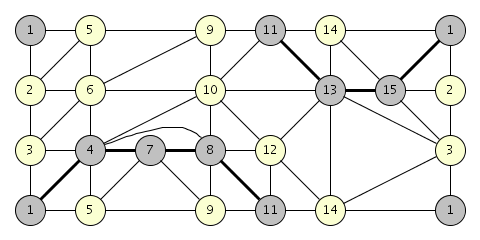}\\
(a) \hspace{6cm} (b) 
\end{center}
\caption{\small\sf 
(a) A graph embedded on the torus that has been cut into a topological disk
using the cycles $1,2,3$ and $1,4,7,8,11,13,15$ with chord $(4,8)$.
The grey nodes correspond to the identical vertices above.
The highlighted path represents a shortest path between the two copies of vertex 1.
(b) The topological disk {\em after} cutting along this new fundamental cycle.
The grey nodes show the old fundamental cycle.
}
\label{fig:chordFree}
\end{figure}

\section{Straight Frame and Polynomial Area}

In this section, we describe our algorithms that construct a drawing
of $G$ in a straight frame using polynomial area.
Here we are given an embedded 
genus-$g$ graph $G=(V,E)$ along with a chord-free polygonal schema, $\cal P$,
for $G$ from the {\em Peel} step. 
We rely on a modified version of the 
algorithm of de Fraysseix, Pach and Pollack~\cite{fpp-hdpgg-90} for the drawing. 
Sections~\ref{sec:embed} and~\ref{sec:embedHigher} describe the details for $g=1$ and for $g>1$,
respectively. 
In the latter case we introduce up to $O(k)$ edges with single bends where $k$ is the number
of vertices on the fundamental cycles.
Thus, we refer to the algorithm for non-toroidal graphs as the
\emph{Peel-and-Bend} algorithm.

\subsection{Grid Embedding of Toroidal Graphs}
\label{sec:embed}

For toroidal graphs we are able to achieve all three aesthetic criteria: straight-line edges, straight frame, and polynomial area. 

\begin{theorem}
\label{theorem:gOneEmbed}
Let $G^*$ be an embedded planar graph and
$\cP=\{P_1, P_2, \dots, P_{4g}\}$ in $G^*$ be a collection of $4g$ paths such that
each path $P_i=\{p_{i,1}, p_{i,2}, \dots, p_{i,k_i}\}$ is chord-free,
the last vertex of each path matches the
first vertex of the next path,
and when treated as a single cycle, $\cP$ forms the external face of $G^*$.
If $g=1$, we can in linear time draw $G^*$ on an $O(n) \times O(n^2)$ grid 
with straight-line
edges and no crossings in such a way that, 
for each path $P_i$ on the external face, 
the vertices on that path form a straight line.
\end{theorem}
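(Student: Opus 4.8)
The plan is to adapt the shift method of de Fraysseix, Pach and Pollack so that the prescribed ``bottom'' path $P_1$ is placed on a horizontal line and stays pinned there throughout the construction, and then to reshape the remaining three contour paths into straight segments. First I would reduce to the case that $G^*$ is $2$-connected and internally triangulated: add edges only inside interior faces, never between two vertices of the same $P_i$, which leaves every $P_i$ chord-free and does not change the outer cycle $P_1 P_2 P_3 P_4$. It suffices to draw this augmented graph, since deleting the added edges preserves being straight-line, crossing-free, and having a straight frame. I will also assume the quadrilateral is nondegenerate, i.e. the four corner vertices are distinct and the $P_i$ are internally disjoint, so in particular $P_1$ and $P_3$ share no vertex.

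Next I would build a \emph{canonical ordering relative to $P_1$ and $P_3$}: an ordering $v_1,\dots,v_n$ of $V(G^*)$ in which $v_1,\dots,v_{k_1}$ are the vertices of $P_1$ in path order, $v_{n-k_3+1},\dots,v_n$ are the vertices of $P_3$ in reverse path order, and for every $k>k_1$ the vertex $v_k$ has at least two earlier neighbors occurring consecutively on the outer boundary of the graph induced by $\{v_1,\dots,v_{k-1}\}$. Existence follows from the standard peeling argument for canonical orderings applied to an internally triangulated $2$-connected plane graph in which $P_1$ and $P_3$ are induced subpaths of the outer cycle; this is precisely where chord-freeness of the $P_i$ is used, since a chord on $P_1$ would block $P_1$ from being an initial block. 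I would then run the shift algorithm: place $v_1,\dots,v_{k_1}$ at $(0,0),(2,0),\dots,(2(k_1-1),0)$ and insert $v_{k_1+1},\dots,v_{n-k_3}$ one at a time exactly as in FPP, each time shifting a nested family of vertex sets rightward to make room, maintaining the invariant that the upper contour is $x$-monotone with every edge of slope $\pm 1$. This keeps $P_1$ pinned to the $x$-axis, so $P_1$ is straight, and after these insertions the drawing lies on an $O(n)\times O(n)$ grid. For the final block I would place $v_{n-k_3+1},\dots,v_n$ at a common height $H=O(n)$ larger than the current peak, with consecutive $x$-coordinates, realizing $P_3$ as a horizontal ``flat roof''; the edges incident to a roof vertex fan downward onto a contiguous stretch of the old contour, so no crossings are created and $P_3$ is straight.

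At this stage $P_1$ and $P_3$ are straight, and the left and right portions of the contour are exactly $P_4$ and $P_2$, each an $x$-monotone polygonal chain with $O(n)$ vertices spread over an $x$-range of $O(n)$. The hard part, and what forces one grid dimension to grow to $O(n^2)$, is to straighten these two chains without creating crossings. The idea is to blow the drawing up vertically by a factor $\Theta(n)$ and then, for $P_4$ (and symmetrically for $P_2$), redistribute only the $y$-coordinates of its interior vertices so that they land on the segment joining its two endpoints, while simultaneously moving each interior vertex of $G^*$ by a controlled amount dictated by the shift sets that produced it, so that all edges remain straight and the left-to-right order along every horizontal line is preserved. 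The claim to verify is that after the vertical blow-up the displacement each contour vertex needs in order to reach the target line is small relative to the vertical gaps already present, so the moves can be carried out on the integer grid while keeping every face nondegenerate; a blow-up factor of $O(n)$ then suffices, giving an $O(n)\times O(n^2)$ grid.

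I expect the main obstacle to be exactly this last step: showing that the bending of the two side contours $P_2$ and $P_4$ onto lines can be absorbed by the interior of the drawing without introducing crossings, and within a polynomial grid. By contrast, the existence of the canonical ordering with $P_1$ as an initial and $P_3$ as a final block, and the flat-roof insertion, are routine consequences of the standard canonical-ordering and shift-method machinery, and the $O(n)$ width and $O(n)$ pre-stretch height are immediate from the FPP analysis.
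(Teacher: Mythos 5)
Your proposal founders on the very step the paper identifies as the crux: forcing $P_3$ to be a \emph{final} block of a canonical ordering whose initial block is $P_1$. Chord-freeness of each individual $P_i$ does not forbid chords of the outer cycle joining two \emph{different} sides, e.g.\ an edge from an interior vertex $v$ of $P_3$ to an interior vertex $u$ of $P_1$ (such an edge may be present in $G^*$ or created by your triangulation). In the reverse peeling, $v$ cannot be removed while $(u,v)$ is a chord of the current outer face, and it stops being a chord only after everything between $u$ and $v$ on one side of the outer cycle has been peeled --- which includes vertices of $P_1$, the block you insist on keeping to the end. So the ordering you call a ``routine consequence of the standard canonical-ordering machinery'' does not exist in general; this is exactly the obstruction the paper states (``we cannot force the top row of vertices to all be the last set of vertices inserted'') and is why it instead splits $G^*$ into two pieces along a shortest path in an augmented dual (Lemma~\ref{lemma:splitB}, with $L=P_4$, $R=P_2$), runs a modified shift algorithm on each piece with its own bottom row ($P_1$ for one, $P_3$ for the other), flips one piece, and stitches them across an outerplanar gap.

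The second gap is the straightening of $P_2$ and $P_4$, which you defer to a vertical blow-up followed by a $y$-only redistribution and yourself flag as the main obstacle --- but no argument is given, and it is not clear one can be supplied in this form: the $x$-coordinates of the $P_4$ vertices produced by a plain FPP run need not be monotone along the path, so placing them on a non-vertical segment by adjusting only $y$-coordinates can scramble their order along that segment and degenerate or cross the path's own edges, and moving a contour vertex downward toward a target line is not something the shift-set machinery supports. The paper avoids post-processing entirely: it changes the insertion invariant so that vertices of $P_2$ and $P_4$ are placed on vertical walls as they are inserted (the Type~1/2 rules), using chord-freeness of those paths to guarantee the required visibility; each such insertion may raise the wall by up to the current width $W\le 2n$, which is precisely where the $O(n)\times O(n^2)$ bound comes from. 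As written, your proof establishes only that $P_1$ can be made straight; the other three sides rest on two unproved (and, for the ordering, false-in-general) claims.
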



\begin{proof}
For simplicity, we assume that every face is a triangle, 
except for the outer face (extra edges can be added and later removed).
The 
algorithm of de Fraysseix, Pach and Pollack (dPP)~\cite{fpp-hdpgg-90} does not directly solve our problem because of the additional requirement for the drawing of the external face.
In the case of $g=1$, the additional requirement is that the graph must be drawn so that the external face forms a rectangle,
with $P_1$ and $P_3$ as the top and bottom horizontal boundaries and $P_2$ and $P_4$ as the right and left boundaries.

Recall that the dPP algorithm computes a canonical labeling of the vertices of the 
input graph and inserts them one at a time in that order while ensuring that 
when a new vertex is introduced it can ``see'' all of its already inserted neighbors. 
One technical difficulty lies in the proper placement of the top row of vertices.
Due to the nature of the canonical order, we cannot force the top row of vertices
to all be the last set of vertices inserted, unlike the bottom row which can be the first set inserted.
Consequently, we propose an approach similar to that of Miura, Nakano, and Nishizeki~\cite{nakano4connected01}.  First, we 
split the graph into two parts (not necessarily of equal size), perform a modified embedding on both pieces,
invert one of the two pieces, and stitch the two pieces together.

\begin{lemma}
\label{lemma:splitA}
Given an embedded plane graph $G$ that is fully triangulated except for the external face
and two edges $e_l$ and $e_r$ on that external face,
it is possible in linear time to partition $V(G)$ into two subsets $V_1$ and $V_2$ such that
\begin{enumerate}
\item the subgraphs of $G$ induced by $V_1$ and $V_2$, called $G_1$ and $G_2$, are both
connected subgraphs;
\item for edges $e_l=(u_l,v_l)$ and $e_r=(u_r,v_r)$, we have $u_l,u_r \in V_1$ and $v_l,v_r \in
  V_2$; 
\item the union $U$ of the set of faces in $G$ that are not in $G_1$ or $G_2$
forms an outerplane graph with the property that the 
external face of $U$ is a cycle with no repeated vertices.
\end{enumerate}
\end{lemma}

\begin{proof}
First, we compute the dual $D$ of $G$, where each face in (the primal graph) $G$ is a node in $D$
and there is an arc between two nodes in $D$ if their corresponding primal faces share an
edge in common.
We ignore the external face in this step.
For clarity we shall refer to vertices and edges in the primal and nodes and arcs in the dual; see Fig.~\ref{fig:split}(a).
We further augment the dual by adding an arc between two nodes in $D$ if they also share a vertex in common.
Call this augmented dual graph $D^*$.

Let the source node $s$ be the node corresponding to the edge $e_l$ and
the sink node $t$ be the node corresponding to the edge $e_r$.
We then perform a breadth-first shortest-path traversal from $s$ to
$t$ on $D^*$; see Fig.~\ref{fig:split}(b). Let $p^*$ be a shortest (augmented) path in $D^*$ obtained by this search.
We now create a (regular) path $p$ by expanding the augmented arcs added.
That is, if there is an arc $(u,v) \in p^*$ such that $u$ and $v$ share a common
vertex in $G$ but {\em not} a common edge in $G$, i.e. they are 
part of a fan around the common vertex,
we add back the regular arcs from $u$ to $v$ adjacent to this common
vertex.
The choice of going clockwise or counter-clockwise around the common
vertex depends on the previous visited arc; see
Fig.~\ref{fig:split}(c).

All of the steps described above can be easily implemented in linear
time. The details of the proof can be found in~\cite{XXX}.
\qed
\end{proof}

\begin{figure}[t]
\begin{center}\begin{tabular}{cc}
\includegraphics[width=5cm]{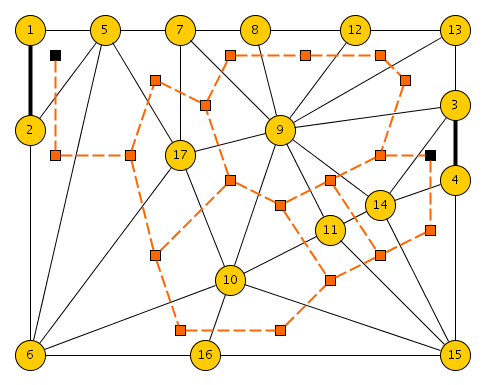} &
\includegraphics[width=5cm]{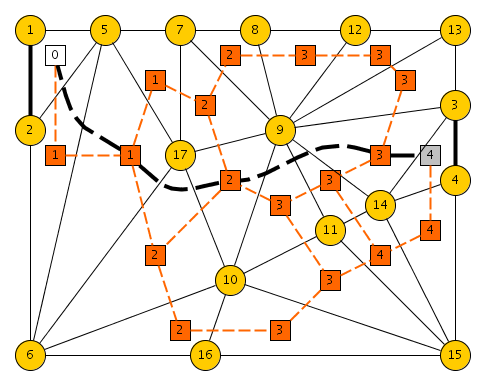}\\
(a) & (b)\\
\includegraphics[width=5cm]{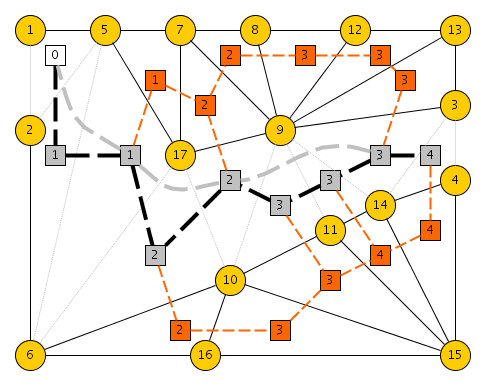} &
\includegraphics[width=5cm]{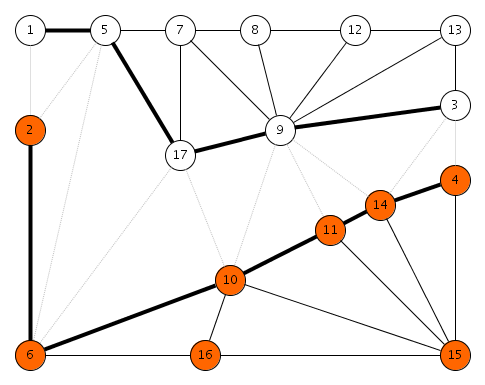}\\
(c) & (d)
\end{tabular}\vspace{-0.5cm}\end{center}
\caption{\small\sf (a) A graph $G$ and its dual $D$. 
The dark edges/nodes represent the sink and source nodes.
(b) Each dual node is labeled with its distance (in $D^*$) from the start node $0$.
A shortest path $p^*$ is drawn with thick dark arcs.  This path includes
the augmented arcs of $D^*$.
(c) The path $p$ formed after expanding the augmented arcs.
The edges from the primal that are cut by this path are shown faded.
(d) The two sets $V_1$ (light vertices) and $V_2$ (darker vertices) formed by the removal
of path $p$.
The external face of $U$ is defined by the thick edges along with the edges $(1,2)$ and $(3,4)$.}
\label{fig:split}
\end{figure}

Figure~\ref{fig:split}(d) illustrates the result of one such partition.
In some cases we might have to start and end with a set of edges
rather than just the two edges $e_l$ and $e_r$. The following
extension of Lemma~\ref{lemma:splitA} addresses this issue; the
details of the proof can be found in~\cite{XXX}.

\begin{lemma}
\label{lemma:splitB}
Given an embedded plane graph $G$ that is fully triangulated, except for the external face,
and given two vertex-disjoint chord-free paths $L$ and $R$ on that external face,
it is possible in linear time to partition $V(G)$ into two subsets $V_1$ and $V_2$ such that
\begin{enumerate}
\item the subgraphs of $G$ induced by $V_1$ and $V_2$, called $G_1$ and $G_2$, are both
connected subgraphs;
\item there exists exactly one vertex $v \in V(L)$ (say $v \in V_1$) with neighbors 
in $V(L) \setminus V_2$ (the opposite vertex set that are not part of $V(L)$),
the same holds for $V(R)$; and
\item the union $U$ of the set of faces in $G$ that are not in $G_1$ or $G_2$
forms an outerplane graph with the property that the 
external face of $U$ is a cycle with no repeated vertices.
\end{enumerate}
\end{lemma}

We can now discuss the steps for the grid drawing of the genus-1 graph $G^*$ 
with an external face formed by $\cP$.
Using Lemma~\ref{lemma:splitB}, with $L=P_4$ and $R=P_2$, divide $G^*$ into two subgraphs
$G_1$ and $G_2$.  We proceed to embed $G_1$ with $G_2$ being symmetric.
Assume without loss of generality that $G_1$ contains the bottom path, $P_3$.
Compute a canonical order of $G_1$ so that the vertices of $P_3$ are the last
vertices removed.
Place all of the vertices of $P_3$ on a
horizontal line, $p_{3,k_3}, p_{3,k_3-1}, \dots, p_{3,1}$ placed consecutively on $y=0$.
This is possible since there are no edges between them (because the path is chord-free).
Recall that the standard dPP algorithm~\cite{fpp-hdpgg-90} maintains the invariant that at the start
of each iteration, the current external face
consists of the original horizontal line and a set of
line segments of slope $\pm 1$ between consecutive vertices. 
The algorithm also maintains a ``shifting set'' for each vertex.
We modify this condition by requiring that the vertices 
on the right and left
boundary that are part of $P_2$ and $P_4$ be aligned vertically
and that the current external face might have horizontal slopes corresponding
to vertices from $P_3$; see Fig.~\ref{fig:embed}(a).
Upon insertion of a new vertex $v$, the vertex will have consecutive
neighboring vertices on the external face.
We label the left and rightmost neighbors $x_\ell$ and $x_r$.
To achieve our modified invariant, 
we insert a vertex $v$ into the current drawing depending on its
type, $0$, $1$, or $2$, as follows:

\noindent{\bf Type 0:} Vertices not belonging to a path in $\cP$ are inserted as 
with the traditional dPP algorithm.
This insertion might require up to two horizontal shifts determined by
the shifting sets; see Fig.~\ref{fig:embed}(a).

\noindent {\bf Type 1:} Vertices belonging to $P_2$, which must be placed vertically along
the right boundary, are inserted with a line 
segment of slope $+1$ between $x_\ell$ and $v$ and 
a vertical line segment between $v$ and $x_r$.
Notice that $x_r$ must also be in $P_2$.
And because $P_2$ is chord-free $x_r$ is the topmost vertex on the right side of the current
external face.
That is, $v$ can see $x_r$.
By Lemma~\ref{lemma:splitB} and the fact that the graph was fully triangulated, 
we also know that $v$ must have a vertex $x_\ell$.
This insertion requires only 1 shift, for the visibility of $x_\ell$ and $v$.
Again the remaining vertices $x_{\ell+1}, \dots, x_{r-1}$ are
connected as usual; see Fig.~\ref{fig:embed}(b).

\noindent{\bf Type 2:} Vertices belonging to $P_4$, which must be placed vertically along
the left boundary, are handled similarly to Type 1.
Because
of Lemma~\ref{lemma:splitB}, after processing both $G_1$ and $G_2$, 
 we can proceed to stitch the two portions together.
Shift the left wall of the narrower graph sufficiently
to match the width of the other graph.
For simplicity, refer to the vertices on the external face of each subgraph
that are not exclusively part of the wall or bottom row as \define{upper external vertices}.
For each subgraph, 
consider the point $p$ located at the intersection of the lines of slope $\pm 1$
extending from the left and rightmost external vertices.
Flip $G_2$ vertically placing it so that its point $p$ lies
either on or just above (in case of non-integer intersection) $G_1$'s point.
Because the edges between the upper external vertices have
slope $|m| \leq 1$ and because of the vertical separation of the two
subgraphs, every upper external vertex on $G_1$ can directly see every
upper external vertex on $G_2$.
By Lemma~\ref{lemma:splitB}, we know that the set of edges removed in the separation
along with the edges connecting the upper external vertices forms an outerplanar graph.
Therefore, we can reconnect the removed edges, joining the two subgraphs, without introducing
any crossings.

\begin{figure}[t]
\begin{center}
\begin{tabular}{cc}
\includegraphics[width=4.7cm]{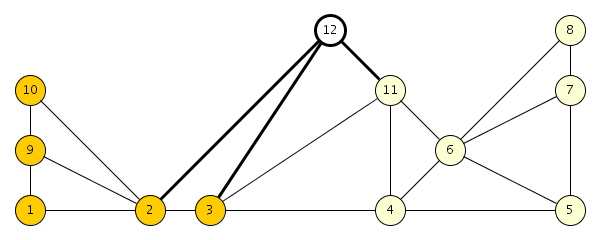} &
\includegraphics[width=4.7cm]{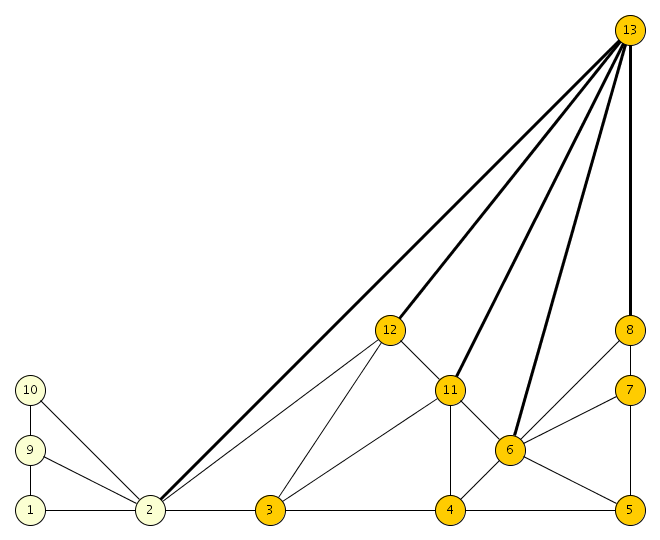}\\
(a) & (b)
\end{tabular}\end{center}\vspace{-0.5cm}
\caption{\small\sf 
(a)  The embedding process after insertion of the first 11 vertices and
the subsequent insertion of 
a Type 0 vertex with $v=12$, $x_\ell=2$, and $x_r=11$.
Note the invariant condition allowing
the two partial vertical walls $\{8,7,5\} \subset P_2$
and $\{1,9,10\} \subset P_4$.
The light vertices to the right of $12$ including $x_r$ 
have been shifted over one unit.
(b) The result of inserting a Type 1 vertex with $v=13$, $x_\ell=2$, and $x_r=8$.
Note, the light vertices to the left of and including $x_\ell=2$ are shifted
over one unit.
}
\label{fig:embed}
\end{figure}

We claim that the area of this grid is $O(n) \times O(n^2)$.
First, let us analyze the width.
From our discussion, we have accounted for each insertion step using shifts.
Since the maximum amount of shifting of 2 units is done with Type $0$ vertices,
we know that each of the two subgraphs has width at most $2n$.
In addition, the stitching stage only required a shifting of the smaller width 
subgraph.
Therefore, the width of our drawing is at most $2n$.
Ideally, the height of our drawing would also match this bound.
The stitching stage for example only adds at most $W \leq 2n$ units to the final
height.
After the insertion of each wall vertex we know that the height increases
by at most $W$.
Therefore, we know that the height is at most $Wn$ or $2n^2$
and consequently we have a correct drawing using a grid of size $O(n) \times O(n^2)$.
\qed
\end{proof}

Thus, we get a planar polygonal-schema drawing of a torodial graph $G$ in 
a rectangle, which simultaneously achieves polynomial area, a
straight frame, and straight-line edges.

\subsection{The Peel-and-Bend Algorithm}
\label{sec:embedHigher}

The case for $g>1$ is similar, but involves a few alterations.
First, we use $n=|V|$ unlike the previous sections which used
$n = |V| + g$.
The main difference, however, is that we cannot embed the outer face using only horizontal and vertical walls
unless the fundamental cycles are 
chord-free or 
unless edge bends are allowed.
Since we desire a straight-frame rendering of the polygonal
schema $\cal P$ in a
rectangle, we must allow some edge bends in this case.
The following theorem describes our resulting drawing method, which
we call the \emph{Peel-and-Bend} algorithm.

\begin{theorem}
\label{theorem:gEmbed}
Let $G^*$ be an embedded planar graph and
$\cP=\{P_1, P_2, \dots, P_{4g}\}$ in $G^*$ be a collection of $4g$ paths such that
each path $P_i=\{p_{i,1}, p_{i,2}, \dots, p_{i,k_i}\}$ is chord-free,
the last vertex of each path matches the
first vertex of the next path,
and when treated as a single cycle, $\cP$ forms the external face of $G^*$.
Let $k=\sum_{i=1}^{4g} (k_i-1)$ be the number of vertices on the external cycle.
We can draw $G^*$ on an $O(n) \times O(n^2)$ grid 
with straight-line
edges and no crossings and at most $k-3$ single-bend edges
in such a way that for each path $P_i$ on the external face the
vertices on that path form a straight line.
\end{theorem}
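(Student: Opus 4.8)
The plan is to reduce the $g>1$ case to the $g=1$ construction of Theorem~\ref{theorem:gOneEmbed} by treating the $4g$-sided external polygon as if it were a $4$-sided rectangle, introducing single-bend edges only where the extra sides must be ``folded'' onto the four walls of that rectangle. First I would group the $4g$ boundary paths $P_1,\dots,P_{4g}$ into four consecutive arcs, one to serve as each side of the bounding rectangle: the bottom arc $B$, the top arc $T$, and the left and right walls $L$ and $R$. Since each individual $P_i$ is chord-free but the concatenation forming a wall need not be, I would place the vertices of each wall collinearly (horizontal line $y=0$ for $B$, vertical lines for $L$ and $R$, and the top line for $T$ after stitching), exactly as in the toroidal construction, and then run the modified dPP insertion with vertex types: Type~0 for interior vertices, Type~1/2 for vertices forced onto the right/left vertical walls, as before. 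The only new phenomenon is that an edge of $G^*$ incident to two non-consecutive boundary paths that both got flattened onto the \emph{same} wall may no longer be realizable as a straight segment; each such edge is routed with a single bend into the interior, away from the wall.

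The key steps, in order, are: (1) apply Theorem~\ref{theorem:chordFree} in the Peel step so we may assume $\cal P$ is chord-free, then split $G^*$ via Lemma~\ref{lemma:splitB} using $L$ and $R$ as the two vertex-disjoint chord-free paths, obtaining $G_1$ (containing $B$) and $G_2$ (containing $T$); (2) on each $G_j$, compute a canonical order with the wall-path vertices last, place the bottom path collinearly, and run the type-0/1/2 insertion to get a drawing in which $B$, $T$, $L$, $R$ are each straight and each remaining $P_i$ that lies strictly inside a wall is also straight (it is a sub-chain of a straight wall); (3) stitch $G_1$ and the vertically flipped $G_2$ along their upper external vertices using the outerplanarity guarantee of Lemma~\ref{lemma:splitB}, reconnecting removed edges without crossings; (4) identify the at most $k-3$ edges that fail to be straight and reroute each with one bend; (5) bound the area by $O(n)\times O(n^2)$ by the same shift-counting argument as in Theorem~\ref{theorem:gOneEmbed} (width $\le 2n$ from at most two shifts per Type-0 insertion plus one shift for stitching; height $\le Wn \le 2n^2$ from at most $W$ per wall-vertex insertion plus $W$ for stitching), observing that single bends are placed on existing grid-adjacent rows/columns and do not change the asymptotic area.

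The main obstacle is step~(4): pinning down \emph{which} edges need a bend and proving the count is at most $k-3$. An edge needs a bend precisely when both endpoints were forced onto a common wall but they are not consecutive along that wall's straight segment in a way that leaves the chord outside the drawn region — equivalently, a chord of the external cycle whose two endpoints, after flattening, would have to pass through or along collinear boundary vertices. I would argue that the external cycle has $k$ vertices, that a chord-free \emph{single} cycle triangulated on the outside can have its triangulation fan drawn straight with only the $k-3$ triangulation chords potentially needing attention, and that in the worst case each of these $k-3$ ear/diagonal edges of the outer polygon is the culprit; bounding each such edge to a single bend is then a local routing argument (route just inside the wall, parallel to it, turning once toward the interior). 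Care is also needed that these bends, together with the Type-1/2 insertions, do not interfere with one another or with the stitching region — I expect this to follow because the bends live in a thin strip immediately interior to each wall, which the dPP invariant keeps empty, but making that precise is the delicate part of the argument.
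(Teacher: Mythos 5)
Your high-level plan matches the paper's: group the $4g$ boundary paths into four consecutive arcs forming the sides of a rectangle, invoke the toroidal construction of Theorem~\ref{theorem:gOneEmbed}, and charge one bend to each chord of the external cycle, of which there are at most $k-3$ by planarity. The gap is in how the bends are realized, which is exactly the step you flag as delicate. You propose to run the drawing algorithm first and then reroute each offending chord ``just inside the wall, parallel to it, turning once toward the interior.'' This fails for two reasons. First, the hypotheses of Lemma~\ref{lemma:splitB} and Theorem~\ref{theorem:gOneEmbed} require the wall paths to be chord-free, and the merged arcs $B$, $T$, $L$, $R$ generally are not (each $P_i$ is chord-free, but a chord may join vertices of two different $P_i$'s that land on the same wall), so the type-$0/1/2$ insertion cannot even be carried out as stated before the chords are dealt with. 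Second, a chord $(u,w)$ of the external cycle, together with the boundary sub-path from $u$ to $w$, encloses a nonempty piece of $G^*$; the bent edge must separate that piece from the rest of the interior, so it cannot be routed in a thin strip hugging the wall --- and that strip is not kept empty by any dPP invariant in any case, since the triangulation places vertices and edges arbitrarily close to the walls.

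The paper's fix, which is the one idea missing from your proposal, is to handle the chords \emph{before} drawing: subdivide each chord of the external cycle with a temporary vertex and retriangulate the two incident faces. The resulting graph has a completely chord-free external cycle, so the four-arc grouping satisfies the hypotheses of Theorem~\ref{theorem:gOneEmbed} and is drawn with straight lines and no special casing; the temporary vertices are then reinterpreted as bend points, each original chord becoming a single-bend edge whose placement (and the planarity of the result) is guaranteed by the drawing algorithm itself rather than by an ad hoc post-processing step. The bound of at most $k-3$ subdivision vertices, and hence bends, follows from planarity exactly as you argue.
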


\begin{proof}
First, let us assume that the entire external face, represented by $\cP$, 
is completely chord-free.
That is, if two vertices on the external cycle share an edge then they are adjacent 
on the cycle.
In this case we can create a new set of 4 paths, 
$\cP'=\{P_1, \cup_{i=2, \dots, 2g}P_i, P_{2g+1}, \cup_{i=2g+2, 4g}P_i\}.$
We can then use Theorem~\ref{theorem:gOneEmbed} to prove our claim using no bends.

If, however, there exist chords on the external face, embedding the graph with 
straight-lines becomes problematic, and in fact impossible to do 
using a rectangular outer face.
By introducing a temporary bend vertex for each chord and
retriangulating the two neighboring faces, we can make the external face chord free.
Clearly this addition can be done in linear time.
Since there are at most $k$ vertices on the external face and since the graph
is planar, there are no more than $k-3$ such bend points to add.
We then proceed as before using Theorem~\ref{theorem:gOneEmbed},
subsequently replacing inserted vertices with a bend point.
\qed
\end{proof}

\section{Algorithms for Non-Toroidal Graphs}
In this section, we describe two more algorithms for producing a planar
polygonal-schema drawing of a non-toroidal graph $G$, which is given
together with its combinatorial embedding in a genus-$g$ surface,
$\cal S$, where $g>1$.
As mentioned above,
these algorithms 
provide alternative trade-offs with respect to the three primary
aesthetic criteria we desire for polygonal-schema drawings.
For the sake of space, we describe these algorithms at a very high level
and leave their details and full analysis to the final version of
this paper.

\paragraph{The Peel-and-Stretch Algorithm.}
In the Peel-and-Stretch Algorithm, we find a chord-free polygonal schema 
$\cal P$ for $G$ and cut $G$ along these edges to form a planar graph
$G^*$.
We then layout the sides of $\cal P$ in a straight-frame manner as a
regular convex polygon, with the vertices along each boundary edge
spaced as evenly as possible.
We then fix this as the outer face of $G^*$ and apply Tutte's
algorithm~\cite{t-crg-60,t-hdg-63} 
to construct a straight-line drawing of the rest
of $G^*$.
This
algorithm therefore achieves a drawing with straight-line 
edges in a (regular) straight frame, but it may require exponential
area when normalized to an integer grid, since Tutte's drawing algorithm
may generate vertices with coordinates that require $\Theta(n\log n)$
bits to represent.

\paragraph{The Peel-and-Place Algorithm.}
In the Peel-and-Place Algorithm, we start by finding a 
polygonal schema 
$\cal P$ for $G$ and cut $G$ along these edges to form a planar graph
$G^*$, as in all our algorithms.
In this case, we then create a new triangular face, $T$, and place
$G^*$ in the interior of $T$, and we fully triangulate this graph.
We then apply the dPP algorithm~\cite{fpp-hdpgg-90}
to construct a drawing of this graph in an $O(n)\times O(n)$ integer
grid with straight-line edges. 
Finally, we remove all extra edges to produce a polygonal
schema drawing of $G$.
The result will be a polygonal-schema drawing with straight-line
edges having polynomial area, but there is no guarantee that it is a
straight-frame drawing, since the dPP algorithm makes no
collinear guarantees for vertices adjacent to the vertices on the
bounding triangle.

\section{Conclusion and Future Work}
In this paper,
we present several algorithms for polygonal-schema drawings of
higher-genus graphs. 
Our method for toroidal graphs achieves drawings that 
simultaneously use straight-line edges in a straight frame and
polynomial area.
Previous algorithms for the torus were restricted
to special cases or did not always produce
polygonal-schema renderings~\cite{e-tbtdo-09,kns-dgt-01,Vodopivec20081847}.
Our methods for non-toroidal graphs can achieve any two of these
three criteria. It is an open problem
whether it is possible to achieve all three of these aesthetic
criteria for non-toroidal graphs.
To the best of our knowledge,
previous algorithms for general graphs in genus-$g$ surfaces were
restricted to those with ``nice'' polygonal
schemas~\cite{z-dgs-94}.

\bibliographystyle{abbrv}
{
\begin{small}
\vspace{-.3cm}\bibliography{stephen}
\end{small}
}

\end{document}